\documentclass[11pt]{article}
\usepackage{amsmath,amssymb,amsthm,amsfonts}
\usepackage{latexsym}
\usepackage{epsfig}
\usepackage[right=0.8in, top=1in, bottom=1in, left=0.8in]{geometry}
\usepackage{setspace}
\usepackage{bbold}
\usepackage{mathtools}
\usepackage{dirtytalk}
\usepackage{enumerate}
\usepackage{color,soul}
\usepackage{stackengine}
\usepackage{algpseudocode}
\usepackage{multicol}
\usepackage{extarrows}
\usepackage[linesnumbered,ruled,vlined]{algorithm2e}
\usepackage{verbatim}
\usepackage[backref,colorlinks,citecolor=blue,bookmarks=true]{hyperref}
\usepackage[nameinlink]{cleveref}
\usepackage[normalem]{ulem}

\usepackage{scalerel,stackengine}
\stackMath
\newcommand\reallywidehat[1]{%
\savestack{\tmpbox}{\stretchto{%
  \scaleto{%
    \scalerel*[\widthof{\ensuremath{#1}}]{\kern-.6pt\bigwedge\kern-.6pt}%
    {\rule[-\textheight/2]{1ex}{\textheight}}
  }{\textheight}%
}{0.5ex}}%
\stackon[1pt]{#1}{\tmpbox}%
}

\newcommand{\N}{{\mathbb{N}}}

\newcommand{\R}{{\mathbb{R}}}


\newcommand{\A}{{\mathcal{A}}}

\newtheorem{theorem}{Theorem}
\newtheorem*{theorem*}{Theorem}

\newtheorem{lemma}[theorem]{Lemma}

\newtheorem{definition}[theorem]{Definition}
\newtheorem{claim}[theorem]{Claim}
\newtheorem{fact}[theorem]{Fact}


%
\makeatletter
\newtheorem*{rep@theorem}{\rep@title}
\newcommand{\newreptheorem}[2]{%
\newenvironment{rep#1}[1]{%
 \def\rep@title{#2 \ref{##1}}%
 \begin{rep@theorem}}%
 {\end{rep@theorem}}}
\makeatother
\newreptheorem{theorem}{Theorem}

\newcommand{\norm}[1]{\left\lVert#1\right\rVert}

\usepackage[pdftex,dvipsnames]{xcolor}  
\usepackage[utf8]{inputenc}
\usepackage{amsmath}
\crefname{ineq}{inequality}{inequalities}
\creflabelformat{ineq}{#2{\upshape(#1)}#3}
\usepackage{cryptocode}   
\graphicspath{ {img/} }
%
\usepackage[colorinlistoftodos,prependcaption,textsize=tiny]{todonotes}
\newcommandx{\unsure}[2][1=]{\todo[linecolor=red,backgroundcolor=red!25,bordercolor=red,#1]{#2}}
\newcommandx{\change}[2][1=]{\todo[linecolor=blue,backgroundcolor=blue!25,bordercolor=blue,#1]{#2}}
\newcommandx{\info}[2][1=]{\todo[linecolor=OliveGreen,backgroundcolor=OliveGreen!25,bordercolor=OliveGreen,#1]{#2}}
\newcommandx{\improvement}[2][1=]{\todo[linecolor=Plum,backgroundcolor=Plum!25,bordercolor=Plum,#1]{#2}}
\newcommandx{\thiswillnotshow}[2][1=]{\todo[disable,#1]{#2}}

\newcommand{\remove}[1]{}

\renewcommand{\bf}{\normalfont \bfseries}

\renewcommand{\paragraph}[1]{\vspace{0.7ex}\noindent{\bf #1}}

\newcommand{\LCS}[0]{\textsc{LCS}\xspace}
\newcommand{\LIS}[0]{\textsc{LIS}\xspace}

\newcommand{\OPT}[0]{\textsf{OPT}\xspace}

\newcommand{\1}[0]{\mathbb{1}}




\newcommand{\abl}[0]{{\cA_{\text{BL}}}}

\def\1{{\mathbf{1}}}

\newcommand{\cA}{\mathcal{A}}

\def\colorful{1}

\ifnum\colorful=1

\fi
\ifnum\colorful=0

\fi

\newcommand{\ignore}[1]{{}}


\makeatletter

\makeatother

\newcommand{\reslis}[0]{\textsf{Block-LIS}\xspace}

\newcommand{\blocklis}[0]{{\textsf{Block-LIS}}}

\title{
Approximating the Longest Common Subsequence problem within a sub-polynomial factor in linear time
}

\author{
Negev Shekel Nosatzki\\Columbia University\\\texttt{ns3049@columbia.edu}}

\date{}

\begin{document} 
\maketitle
\thispagestyle{empty}

\begin{abstract}
	The Longest Common Subsequence (\LCS) of two strings is a fundamental string similarity measure with a classical dynamic programming solution taking quadratic time. Despite significant efforts, little progress was made in improving the runtime. Even in the realm of approximation, not much was known for linear time algorithms beyond the trivial $\sqrt{n}$-approximation. Recent breakthrough result provided a $n^{0.497}$-factor approximation algorithm \cite{doi:10.1137/1.9781611975482.72}, which was more recently improved to a $n^{0.4}$-factor one\cite{DBLP:journals/corr/abs-2106-08195}.
The latter paper also showed a $n^{2-2.5\alpha}$ time algorithm which outputs a $n^{\alpha}$ approximation to the \LCS, but so far no sub-polynomial approximation is known in truly subquadratic time.

		In this work, we show an algorithm which runs in $O(n)$ time, and outputs a $n^{o(1)}$-factor approximation to $\LCS(x,y)$, with high probability, for any pair of length $n$ input strings.
		
		Our entire algorithm is merely an efficient black-box reduction to the $\reslis$ problem, introduced very recently in \cite{andoni2021estimating}, and solving the $\reslis$ problem directly.
	 	\end{abstract} 
	 	
		\newpage
\setcounter{page}{1}

\section{Introduction}

The Longest Common Subsequence problem (\LCS) is a classical string alignment problem. Formally, given two strings $x,y$ of length $n$ over alphabet  $\Sigma$, the longest common subsequence between $x$ and $y$ is the largest string which appears as a subsequence in both $x$ and $y$. 
The (length of the) \LCS has a classical dynamic programming solution in quadratic time \cite{WF74}. 

\LCS is also a fundamental string similarity measure of key importance both in theory and practice, and therefore has been extensively studied during the course of the past 50 years \cite{ullman1976bounds,10.1145/322033.322044,10.1145/359581.359603,apostolico1987longest,10.1145/146637.146650,SunWoodruff_LCS_LIS,rubinstein2019approximation,rubinstein2020reducing}, with the main goal of improving the run-time from
polynomial towards close(r) to linear. 

Despite significant attention, little progress was made, and the runtime has improved by logarithmic factors only \cite{MP80}. More recently, fine-grained hardness results helped explaining the lack of progress, showing that any truly
subquadratic solution for LCS would refute the Strong Exponential Time Hypothesis (SETH)\cite{7354388,bringmann2015quadratic}. Moreover, for the \LCS problem in particular, it was later shown that even a high polylogarithmic speedup would be surprising  \cite{abboud2016simulating,abboud2018tighter}.

Even before the above hardness results, researchers started considering faster algorithms that approximate the \LCS, but there as well the progress has been slow. The trivial $\sqrt{n}$-factor approximation was only recently improved to $\approx n^{0.497}$ \cite{doi:10.1137/1.9781611975482.72} and even more recently to $n^{0.4}$ \cite{DBLP:journals/corr/abs-2106-08195}.
 In parallel, a parametrized version was developed which gives a subquadratic solution with improved approximation when the longest common subsequence is promised to be above a certain threshold \cite{rubinstein2019approximation}.
 
In contrast, the progress for the corresponding Edit Distance problem has been more ``rosy'', with a series of advances \cite{BJKK04,BES06,AKO-edit,chakraborty2018approximating}, recently culminating in a near-linear time, constant factor approximation algorithm \cite{asn20edit}.

\subsection{Main result}

Our main result is a sub-polynomial factor approximation algorithm for \LCS, which runs in linear time.

\begin{theorem}\label{thm::lcs_main}
	$\LCS_n$ can be approximated up to $n^{o(1)}$-factor, with high probability, and in $O(n)$ time.
\end{theorem}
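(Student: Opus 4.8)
The plan is to realize the two-step route promised in the abstract: (i) reduce, in linear time, the problem of approximating $\LCS(x,y)$ to an instance of the $\reslis$ problem of \cite{andoni2021estimating}, losing only a sub-polynomial factor; and (ii) give a direct linear-time, $n^{o(1)}$-approximation algorithm for $\reslis$ itself. The natural starting point for (i) is the classical identification of a common subsequence with a monotone chain $(i_1,j_1)\prec\dots\prec(i_k,j_k)$ of matched positions, i.e.\ a Longest Increasing Subsequence ($\LIS$) in the planar match set $\{(i,j): x_i=y_j\}$. The obstruction to using this verbatim is that the match set can contain $\Theta(n^2)$ points, so it cannot be materialized within the linear-time budget; the whole difficulty is to operate at a coarse, block-level granularity without losing more than an $n^{o(1)}$ factor.

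For the reduction I would run \partition to cut $y$ into consecutive blocks and partition the positions of $x$ in a compatible way, and then replace the raw matches between a pair of blocks by a single aggregate weight approximating the local common-subsequence length between them. A monotone, weight-accumulating chain of such block-matches is precisely a $\reslis$ instance, and the crux of this step is to show that its optimum approximates $\LCS(x,y)$: the upper bound (block optimum $\lesssim \LCS$) is immediate, whereas for the lower bound I would take a near-optimal common subsequence, project it onto the block grid, and argue by an averaging/charging argument that it is captured by some block-chain up to the chosen block length, so that the loss is controlled by the number of distinct scales used.

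Since no single block length is simultaneously cheap and accurate across the whole range $1\le \LCS \le n$, the scale must be handled hierarchically, and this is where the \prectrees enter: I would build a laminar family of block partitions at geometrically decreasing lengths so that, whatever the (unknown) true LCS density is, some level of the tree resolves it, and then solve the induced $\reslis$ subinstances across levels. For solving each $\reslis$ instance directly I would use a sweep over blocks in index order, maintaining at $n^{o(1)}$ precision the best chain-weight achievable ending at each $y$-block; bucketing the attainable weights into $n^{o(1)}$ geometric levels and keeping one representative per bucket ensures that each block is touched only $n^{o(1)}$ times, and a careful accounting of the total block mass summed over all tree levels keeps the aggregate work linear.

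I expect the main obstacle to be the simultaneous control of the two budgets. On the approximation side, the sub-polynomial slack must survive being compounded through the block aggregation, the multi-scale \prectree combination, and the bucketed dynamic program, which forces each individual source of error to be only $2^{o(\log n)}$ and pins down how the number of scales, the block lengths, and the bucket granularities must be chosen jointly. On the running-time side, keeping the bound at the claimed $O(n)$ rather than merely $n^{1+o(1)}$ requires that the total size of all emitted $\reslis$ subinstances, summed over every node of the \prectree, be $O(n)$, which in turn demands that block statistics be computed incrementally and reused across levels rather than recomputed from scratch; reconciling this reuse with the approximation guarantees of $\reslis$, so that the black-box interface of \cite{andoni2021estimating} still applies, is, I expect, the technically hardest part.
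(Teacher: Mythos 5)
Your proposal diverges from the paper in ways that create genuine gaps. The paper's reduction is not a lossy block-aggregation scheme at all: it defines $z \triangleq \oplus_{i \in [n]} \left\{y^{-1}(x_i)\right\}$, i.e., the $i$-th block is simply the set of positions of $y$ holding the letter $x_i$, and this gives an \emph{exact} identity $\reslis(z) = \LCS(x,y)$ --- no \partition, no \prectrees, no projection/charging argument, and no compounding of approximation errors across scales. Your scheme, by contrast, needs pairwise ``aggregate weights approximating the local common-subsequence length'' between blocks, which is circular (those local weights are themselves \LCS instances you cannot afford to compute), and a weight-accumulating chain problem is no longer the $\reslis$ problem as defined; so the black-box interface of \cite{andoni2021estimating} would not apply to it. Moreover, the paper does \emph{not} re-derive a $\reslis$ solver at all --- it invokes \Cref{thm::reslis_main} as a black box --- and your sketched sweep-plus-bucketing dynamic program falls well short of re-proving that theorem's $(\alpha,\lambda n)$-guarantee.

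More importantly, you flag the right obstacle --- the match set can have $\Theta(n^2)$ points --- but your fix (coarsening) is not the one that works. The paper's resolution is \Cref{lm::lcs_lower_bound}: by H\"older's inequality (\Cref{cl::min_holder} together with \Cref{ft::min_lis}), $|z| = \langle\psi(x),\psi(y)\rangle \leq 2n \cdot \LCS(x,y)$, so whenever there are many matches the \LCS is proportionally large. The algorithm therefore sets $d = |z|/2n \leq \LCS(x,y)$, calls $\abl$ with additive error $\lambda n = d/\log n$, and outputs the maximum of $d$ and the returned value: the ratio $|z|/(\lambda n) = 2n\log n$ bounds the black-box runtime by $t(O(n\log n),n) = n^{1+o(1)}$, and the additive loss is harmless because $d$ itself is a valid fallback estimate. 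Finally, you miss the last step entirely: $n^{1+o(1)}$ is not $O(n)$, and the paper reaches genuine linear time by subsampling $U \subseteq [n]$ i.i.d.\ at rate $n^{-o(1)}$ and running the whole reduction on $x_U, y_U$, using $\LCS(x_U,y_U) \in [n^{-o(1)},1]\cdot\LCS(x,y)$, so that the runtime becomes $|U|^{1+o(1)} = O(n)$. Without the H\"older lower bound, the fallback output $d$, and the subsampling step, your plan cannot deliver either the runtime or the approximation guarantee of the theorem.
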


While we do not track the exact approximation factor in our analysis, we note it is $\exp\left(\tfrac{\log(n)}{\log^{c}\log(n)}\right)$ for some absolute constant $c$. Interesting enough, we could not find a way to significantly improve the bound above, obtainable in linear time, even if one can afford $n^{1.99}$ runtime. This relates to the issue of balancing of dense and sparse cases described in \cite{andoni2021estimating}, which is the core component of our algorithm.

To improve the bound, one would need to find a more efficient way to solve the \reslis algorithm from \cite{andoni2021estimating}. 

One of the implications of this paper is that an affirmative resolution to the open question posed in \cite{andoni2021estimating} would yield surprising results, and in particular, a near-linear time, $(1+\epsilon)$ approximation to $\LCS$, which also provides near-linear $\epsilon n$ additive approximation to the edit distance.

\section{Preliminaries}

\subsection{Definitions}

\paragraph{Sequences.} Fix alphabet $\Sigma$. We denote $x \in \Sigma^n$ as a length $n$ sequence.

For a sequence $x$, we denote by $\psi(x): \Sigma \rightarrow \N$, the count vector of letters in $x$, i.e., $\psi(x)_c \triangleq |x^{-1}(c)|$ for any $c \in \Sigma$.

We also define integer block sequence $z \in \N^{n \times k}$, and denote $|z|$ as the count of all integers in all the blocks (with repetitions).

\paragraph{Monotone sets.} 
We
 define monotone sets as follows:
\begin{definition}
	We say that a set 
	$P \subseteq \N \times \N$ is \emph{monotone} iff
	for all $((i,k),(j,l)) \in P \times P$, we have (i) $i = j \Leftrightarrow l=k$; and (ii) $i < j \Leftrightarrow k < l$.
\end{definition}

One can observe that 
$$\LCS(x,y) = 
	\max_{\stackrel{P \subseteq [n] \times [n]}{P \text{ is monotone}}} \sum_{(i,j) \in P} \1[x_i = y_j]
$$

\paragraph{The \reslis problem.}
The $\reslis$ problem was defined in \cite{andoni2021estimating}\footnote{The original definition of $\reslis$ included another extension which is not relevant for our reduction, and hence was omitted for exposition.}. In $\reslis$, the main input consists of $n$ blocks of (at most) $k$ elements each, and each block can contribute 
at most one of its elements to a subsequence. Formally, for an integer block sequence 
$z \in \N^{n \times k}$, $\reslis(z)$ is the length of a maximal sub-sequence $\OPT \triangleq \{(w_1,z_w)\}_{w \in [n] \times [k]}$ 
such that $\OPT$ is monotone.

\paragraph{Other notations.} Fix vectors $a,b \in \R^n$, then $\min(a,b)$ (respectively $\max(a,b)$) is defined, as the vector of minimums (respectively, maximums) for each $(a_i,b_i)$ pair.
Also, the notation $\oplus$ means the direct sum.

\subsection{Main technique - \reslis reduction}

Our construction boils down to a formulation of the \LCS problem as a (larger) \LIS problem, and solve the \LIS problem directly. Our reduction at its core is reminiscent to ideas introduced in \cite{10.1145/359581.359603,apostolico1987longest}, with two important differences:

\begin{enumerate}
	\item We provide a simplified approach by direct ``black-box'' reduction to a $\LIS$ problem which is then solved directly. In particular, we reduce to the \reslis problem. As a result, the entire construction and proof takes roughly 1.5 pages.
	\item  We show that the number of ``matching pairs'' is always bounded by $2n \cdot \LCS(x,y)$ using norm inequalities, implying the linear runtime from Lemma 4.3 in \cite{andoni2021estimating}.
\end{enumerate}

In particular, we restate the following Definition and Theorem:

\begin{definition}\cite{andoni2021estimating}
	For $\alpha \geq 1$ 
	and $\beta>0$, an $(\alpha,\beta)$-approximation $\hat{q}$ of a quantity $q$ is an $\alpha$-multiplicative and $\beta$ additive estimation of $q$, i.e., $\hat{q} \in [q/\alpha - \beta, q]$.
\end{definition}

\begin{theorem}\cite{andoni2021estimating}\label{thm::reslis_main}
	Fix $n,k \in \N$ and $z \in \N^{n \times k}$. 
	For any $\lambda = o(1)$, there exists a randomized non-adaptive algorithm $\abl$ that solves $\reslis$ up to $(\alpha,\lambda n)$-approximation, where $\alpha=\left(\tfrac{|z|}{\lambda n}\right)^{o(1)}$ using $\left(\tfrac{|z|}{\lambda n}\right)\cdot n^{o(1)}$ time.
\end{theorem}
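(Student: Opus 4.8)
The plan is to establish the guarantee for the algorithm $\abl$ through a recursive, non-adaptive decomposition of the value space, estimating $\reslis(z)$ bottom-up while paying a $(1+o(1))$ multiplicative factor per level and charging every element never examined to the additive slack $\lambda n$. Throughout I would maintain the invariant that the returned quantity is the value of an \emph{actual} monotone subsequence reconstructed by the algorithm, so that the estimate never exceeds $\reslis(z)$; the work then reduces to the matching lower bound $\hat{q} \ge \reslis(z)/\alpha - \lambda n$ and the runtime. Concretely, I would fix a branching parameter $g = n^{o(1)}$ and recursively split the block axis $[n]$ and the value axis into $g$ contiguous pieces each, inducing a grid of boxes, each a smaller $\reslis$ instance. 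Any monotone solution restricts to a monotone staircase of boxes — a chain increasing in both the block-interval and the value-interval coordinate — so, given an estimate $b(B)$ of $\reslis$ inside each box $B$, the global estimate is the maximum over monotone chains of $\sum b(B)$, computed by a $\poly(g) = n^{o(1)}$ dynamic program over the grid. Concatenating the per-box witnesses along the optimal chain yields a genuine monotone subsequence, so the no-overestimate invariant is preserved, while forcing the optimum to respect box boundaries costs only a $(1+o(1))$ factor per level.

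For efficiency, each $b(B)$ is computed not by exhaustively reading $B$ but from a fixed, non-adaptive subsample of $z$ drawn at rate roughly $n^{o(1)}/(\lambda n)$: the recursion descends only into the sub-boxes populated by sampled elements, and the empirical counts rescale the recursive estimates. Because the sampling pattern is chosen in advance and independently of the observed values, the algorithm is non-adaptive, while the work automatically concentrates on boxes carrying $\gtrsim \lambda n / n^{o(1)}$ elements. Choosing the branching $g$ and the recursion depth $D$ so that the relevant range is covered in $D$ levels while the per-level losses accumulate to $(1+o(1))^{D} = (|z|/\lambda n)^{o(1)} = \alpha$ then yields the claimed multiplicative factor.

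Third — and this is where the real content lies — I would tie runtime and accuracy together through a dense/sparse dichotomy calibrated to $\lambda n$. Declare a box \emph{light} if the sample indicates it holds too few elements to matter and \emph{heavy} otherwise, and prune the light ones. The thresholds must be arranged so that the contributions of \emph{all} pruned boxes along any single monotone staircase telescope to at most $\lambda n$, which bounds the cumulative additive error; simultaneously each heavy box must be guaranteed enough mass that the $|z|$ elements can populate only $(|z|/\lambda n)\cdot n^{o(1)}$ heavy boxes in total across all levels. Multiplying by the $n^{o(1)}$ per-box sampling and stitching overhead then gives the runtime.

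The main obstacle I anticipate is exactly this balancing, which the excerpt itself flags as the core component. Two tensions must be resolved at once and non-adaptively: (i) a single fixed sample must let us classify every box in the tree as heavy or light correctly with high probability, which I would handle by Chernoff concentration together with a level-wise union bound over all boxes, tuning the sampling rate so the failure probability survives the union; and (ii) in the genuinely \emph{sparse} regime — where a long increasing subsequence is carried by relatively few, widely spread elements — the subsample must not undercount that structure beyond the $\lambda n$ budget, so the heavy/light threshold has to be low enough that every box whose true contribution exceeds its error allotment is retained with high probability, yet high enough that the total heavy-box count stays within the runtime target. Getting both requirements to hold for the \emph{same} parameters, uniformly over all $n^{o(1)}$ recursion levels, is the crux of the argument.
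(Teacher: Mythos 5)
First, a point of order: the paper you are working from does not prove this statement at all --- \Cref{thm::reslis_main} is quoted verbatim from \cite{andoni2021estimating} and used strictly as a black box; the paper's own contribution is the reduction to it (\Cref{lm::lcs_lis_reduction}), not the theorem itself. So your attempt can only be judged against the proof in \cite{andoni2021estimating}, whose high-level architecture your sketch does echo: a recursive decomposition of the block axis and the value axis into boxes, a $\poly(g)$ dynamic program over monotone chains of boxes, non-adaptive sampling, and a dense/sparse dichotomy calibrated to the additive budget $\lambda n$.

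Judged as a proof, however, the proposal has a genuine gap, and it is the one you flag yourself: the dense/sparse balancing. You state the two requirements the heavy/light thresholds must satisfy --- (i) the pruned boxes along any single monotone staircase contribute at most $\lambda n$ in total, and (ii) at most $\left(\tfrac{|z|}{\lambda n}\right)\cdot n^{o(1)}$ heavy boxes survive across all levels --- but you never exhibit thresholds achieving both simultaneously, never carry out the Chernoff-plus-union-bound calculation you allude to, and never verify that the accumulated multiplicative loss is $\left(\tfrac{|z|}{\lambda n}\right)^{o(1)}$ rather than something larger; the claim that respecting box boundaries costs only $(1+o(1))$ per level is also asserted, not derived (staircase fragmentation typically costs a constant factor per level, which is exactly why the final factor is sub-polynomial rather than $1+o(1)$). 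This balancing is precisely the technical core of \cite{andoni2021estimating}, and the present paper even remarks that this very issue is what blocks any improvement of the approximation factor. Separately, one concrete step fails as stated: your invariant that the output is the value of an \emph{explicitly reconstructed} monotone subsequence cannot be maintained, because the permitted time $\left(\tfrac{|z|}{\lambda n}\right)\cdot n^{o(1)}$ is in general far smaller than both $|z|$ and the witness length (e.g.\ $|z|=n$, $\lambda = 1/\log n$, $\reslis(z)=\Theta(n)$), and your box estimates are \emph{rescaled} empirical counts from a subsample, which overshoot the truth with constant probability. The one-sided guarantee $\hat{q} \le \reslis(z)$ demanded by the $(\alpha,\beta)$-approximation definition therefore requires a separate argument (e.g.\ deflating the estimates and absorbing the deflation into $\alpha$), which your sketch does not supply.
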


\paragraph{\reslis Formulation for LCS.}
To prove \Cref{thm::lcs_main}, we formulate the $\LCS$ problem as a single string $\blocklis$.
Given $x,y \in \Sigma^n$, we define an integer block sequence $z$ such that each $z_i$ is a block of all $j \in [n]$ indices where $y_j = x_i$. Formally:
\begin{equation}\label{eq::z_def}
z \triangleq \oplus_{i \in [n]} \left\{y^{-1}\left(x_i\right) \right\}.	
\end{equation}

We claim the following equivalence:
\begin{claim}
$\reslis(z) = \LCS(x,y)$.	
\end{claim}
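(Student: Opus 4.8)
The plan is to observe that both $\reslis(z)$ and $\LCS(x,y)$ are, by their respective definitions, equal to the maximum cardinality of a monotone set of \emph{matching} index pairs $(i,j) \in [n]\times[n]$ with $x_i = y_j$, and then to prove the two resulting inequalities separately.

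First I would unwind the definition of $\reslis(z)$ for the specific $z$ from \eqref{eq::z_def}. A selected element of $z$ is indexed by some $w = (w_1,w_2)$ and contributes the pair $(w_1, z_w) = (i,j)$, where $i = w_1$ is the block index and $j = z_w$ is a value stored in block $i$; by the construction of $z$ this value is an index with $y_j = x_i$, i.e. $j \in y^{-1}(x_i)$, so every selected pair is automatically a matching pair. The requirement that $\OPT$ be monotone (conditions (i) and (ii) of the monotone-set definition) says exactly that no two selected pairs may share a first coordinate without being identical — so at most one element is taken from each block, and likewise each value is used at most once — and that the first coordinates increase iff the second coordinates increase. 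Hence $\reslis(z)$ is precisely the largest cardinality of a monotone set $P$ of pairs $(i,j)$ with $x_i = y_j$.

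For the direction $\reslis(z) \le \LCS(x,y)$, I would take an optimal $\reslis$ solution and read it as such a monotone set $P$ of matching pairs; then $\sum_{(i,j)\in P}\1[x_i = y_j] = |P| = \reslis(z)$, so the $\LCS$ maximum is at least $\reslis(z)$. For the reverse inequality I would start from an optimal monotone set $P$ in the $\LCS$ formula and first note that we may discard every non-matching pair: a subset of a monotone set is monotone (the defining conditions are universally quantified over pairs), and dropping the pairs with $x_i \neq y_j$ leaves the objective $\sum_{(i,j)\in P}\1[x_i = y_j]$ unchanged while making $P$ consist solely of matching pairs. For each surviving pair $(i,j)$ we have $y_j = x_i$, hence $j \in y^{-1}(x_i)$ occurs as a value in block $i$ of $z$; selecting those occurrences yields a feasible $\reslis$ subsequence of size $|P| = \LCS(x,y)$, which gives $\reslis(z)\ge \LCS(x,y)$.

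The only real care needed — and the step I would flag as the main obstacle — is verifying that the two monotonicity notions line up exactly under the correspondence $(w_1, z_w) \leftrightarrow (i,j)$: condition (i) must be seen to simultaneously forbid two selections from the same block (the one-per-block constraint of $\reslis$) and to force each $x$-index and each $y$-index to be used at most once (in the $\LCS$ direction), while condition (ii) must translate the strict increase of block indices into the strict increase of the matched $y$-positions consistently in both formulations. Once this bookkeeping is settled, the two inequalities combine to give $\reslis(z) = \LCS(x,y)$.
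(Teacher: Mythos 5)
Your proposal is correct and follows essentially the same route as the paper's own proof: both arguments establish the two inequalities by observing that selections in $\reslis(z)$ and monotone sets of matching pairs $(i,j)$ with $x_i = y_j$ are in exact correspondence under $(w_1, z_w) \leftrightarrow (i,j)$, using $j \in y^{-1}(x_i)$ in one direction and $x_{w_1} = y_{z_w}$ in the other. The bookkeeping you flag (that the monotonicity conditions transfer, and that non-matching pairs can be discarded without changing the $\LCS$ objective) is handled implicitly in the paper's terser write-up, so your version is just a more explicit rendering of the same argument.
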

\begin{proof}
	First, observe that any common subsequence of $x$ and $y$ has a monotone set $P \subseteq [n] \times [n]$ where $x_i = y_j$ for all $(i,j) \in P$. Now, since $j \in y^{-1}(x_i)$, then the second coordinates of $P$ are also an increasing subsequence of $z$.
	
	Similarly, consider the coordinates of an increasing subsequence $W \subseteq [n] \times [k]$. Then $\{(w_1, z_w)\}_{w \in W}$ is a monotone set where $x_{w_1} = y_{z_w}$, and hence $x$ and $y$ has a common subsequence of the same length.
\end{proof}
 Our entire algorithm is merely an efficient implementation of the above reduction, and solving it using \Cref{thm::reslis_main}. In particular, we show the following:

\begin{lemma}\label{lm::lcs_lis_reduction}
	Suppose there exists an algorithm $\abl$ which $\left(f\left(\tfrac{|z|}{\lambda n}\right),\lambda n\right)$ approximates \reslis in time $t
	\left(\tfrac{|z|}{\lambda n},n\right)$. Then, there exists an algorithm $\A_{\LCS}$ approximating $\LCS_n$ up to a factor $(1+o(1)) \cdot f(n \cdot \log n)$ in $t\left(O(n \cdot \log(n)),n\right) + O(n)$ time.
\end{lemma}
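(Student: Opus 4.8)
The plan is to implement the reduction of the preceding Claim as an oracle, run the assumed \reslis approximation algorithm $\abl$ on $z$ with a carefully tuned additive-precision parameter $\lambda$, and then argue that (i) the running time is as claimed and (ii) the $(\alpha,\lambda n)$-guarantee of $\abl$ collapses to an essentially purely multiplicative guarantee for $\LCS$. First I would set up oracle access to $z$: bucket the positions of $y$ by their letter in $O(n)$ time so that each list $y^{-1}(c)$ is available, and let block $i$ of $z$ be the list $y^{-1}(x_i)$. Since $\abl$ is non-adaptive and runs in time proportional to $|z|/(\lambda n)$ rather than $|z|$, it never needs $z$ written out explicitly; providing $O(1)$-time access to entries of each block suffices, and this preprocessing costs only $O(n)$. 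By the preceding Claim, $\reslis(z)=\LCS(x,y)$, so any approximation of $\reslis(z)$ is an approximation of $\LCS(x,y)$ of the same quality.

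The heart of the argument is a bound on the size of $z$, namely $|z|\le 2n\cdot\LCS(x,y)$. Writing $L=\LCS(x,y)$, note first that $|z|=\sum_{i}|y^{-1}(x_i)|=\sum_{c\in\Sigma}\psi(x)_c\,\psi(y)_c$ equals the total number of matching pairs $\sum_{i,j}\1[x_i=y_j]$. I would partition these pairs according to their diagonal $d=j-i$. On a fixed diagonal the set of matching cells $\{(i,i+d):x_i=y_{i+d}\}$ is automatically monotone (increasing $i$ forces increasing $j$), so by the monotone-set characterization of $\LCS$ it contains at most $L$ pairs. As there are only $2n-1$ nonempty diagonals, $|z|=\sum_{d}(\text{matches on }d)\le (2n-1)L<2nL$. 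This single inequality is what simultaneously makes the running time linear and the additive error negligible.

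Given this, I would run $\abl$ with $\lambda=\tfrac{|z|}{n^{2}\log n}$ (and output $0$ in the degenerate case $|z|=0$, where $L=0$); the quantity $|z|$ is computed exactly in $O(n)$ time from the two letter-count vectors. With this choice $|z|/(\lambda n)=n\log n$ exactly, so the call costs $t(O(n\log n),n)$ time and returns an estimate $\hat q$ with multiplicative factor $f(n\log n)$ and additive error $\lambda n=\tfrac{|z|}{n\log n}$. Invoking the diagonal bound gives $\lambda n\le \tfrac{2L}{\log n}=o(L)$, which I will argue is small enough to be absorbed into a $(1+o(1))$ perturbation of the multiplicative estimate $L/f(n\log n)$. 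Then
\[
\hat q\ \ge\ \frac{L}{f(n\log n)}-\lambda n\ \ge\ \frac{L}{(1+o(1))\,f(n\log n)},
\]
while $\hat q\le L$ by the definition of an $(\alpha,\beta)$-approximation, which yields the claimed $(1+o(1))f(n\log n)$-factor estimate; adding the $O(n)$ preprocessing gives the stated time bound.

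The step I expect to be the main obstacle is the interplay in the last paragraph: the additive slack $\lambda n$ must be made negligible compared with the multiplicative estimate $L/f(n\log n)$, yet shrinking $\lambda$ simultaneously enlarges $|z|/(\lambda n)$ and hence the running time. The bound $|z|\le 2nL$ is precisely what reconciles the two, forcing $\lambda n\le 2L/\log n$ at the time-optimal choice $|z|/(\lambda n)=n\log n$; establishing this diagonal bound cleanly, and verifying that the resulting additive term is indeed swallowed by the $(1+o(1))$ factor, is where the real content lies, while the reduction itself and the $O(n)$ preprocessing are routine bookkeeping.
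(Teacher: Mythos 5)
Your reduction, your $O(n)$ preprocessing, your choice of $\lambda$, and the bound $|z|\le 2n\cdot\LCS(x,y)$ are all sound; in fact your diagonal-counting proof of that bound (each diagonal's matches form a monotone set, hence at most $L=\LCS(x,y)$ matches per diagonal, and there are at most $2n-1$ diagonals) is a correct and more combinatorial alternative to the paper's route via H\"older's inequality (\Cref{cl::min_holder,ft::min_lis}). However, the final step of your proof has a genuine gap, precisely at the point you flagged as the main obstacle. You claim
\[
\frac{L}{f(n\log n)}-\lambda n\;\ge\;\frac{L}{(1+o(1))\,f(n\log n)}
\]
using only $\lambda n\le \tfrac{2L}{\log n}$. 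Unwinding the inequality, it requires $\lambda n\le \tfrac{o(1)}{1+o(1)}\cdot\tfrac{L}{f(n\log n)}$, i.e.\ it requires $f(n\log n)=o(\log n)$. The lemma is stated for an arbitrary $f$, and in the intended application (\Cref{thm::reslis_main}) one has $f(n\log n)=(n\log n)^{o(1)}$, e.g.\ $\exp\bigl(\log n/\log^{c}\log n\bigr)$, which is far larger than $\log n$. In that regime $L/f(n\log n)$ is much smaller than your additive slack $2L/\log n$, so the guaranteed lower bound $L/f(n\log n)-\lambda n$ can be negative, and your output $\hat q$ carries no useful lower bound at all: the additive error is $o(L)$ but not $o\bigl(L/f(n\log n)\bigr)$, and only the latter can be absorbed into a $(1+o(1))$ multiplicative factor.

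The missing ingredient is the one extra step in the paper's algorithm: also compute $d=|z|/2n$, which by your own diagonal bound satisfies $d\le L$, and output $\max\{\hat q,\,d\}$ instead of $\hat q$. With your parameter choice the additive error is exactly $\lambda n=\tfrac{|z|}{n\log n}=\tfrac{2d}{\log n}$ --- crucially, bounded in terms of the \emph{computable} lower bound $d$, not merely in terms of the unknown $L$. A weighted-average argument then gives
\[
\max\{\hat q,\,d\}\;\ge\;\frac{\left(\frac{L}{f(n\log n)}-\frac{2d}{\log n}\right)+\frac{2}{\log n}\cdot d}{1+\frac{2}{\log n}}\;=\;\frac{L}{\left(1+\frac{2}{\log n}\right)f(n\log n)}\;=\;(1-o(1))\cdot\frac{L}{f(n\log n)},
\]
valid no matter how $f$ compares to $\log n$; the upper bound $\max\{\hat q,d\}\le L$ still holds since both terms are at most $L$. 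With this one-line change your argument goes through, and your diagonal lemma plays exactly the role of the paper's \Cref{lm::lcs_lower_bound}.
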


\begin{proof}[Proof of \Cref{thm::lcs_main} using \Cref{thm::reslis_main,lm::lcs_lis_reduction}]
	The proof 
	follow almost immediately from \Cref{thm::reslis_main,lm::lcs_lis_reduction}. However, a direct invocation of the statements only yields $n^{1+o(1)}$ time algorithm. To obtain linear time, we first subsample $U \subseteq [n]$ i.i.d. at rate $n^{-o(1)}$, and then apply the statements on $x_U$ and $y_U$, noting $\LCS(x_U,y_U) \in [n^{-o(1)},1] \cdot \LCS(x,y)$. The resulting runtime is $|U|^{1 + o(1)} = O(n)$ as needed.
\end{proof}

\subsection{Useful lower bound for the \LCS using Hölder's inequality}

\paragraph{Hölder's inequality.} A classical norm inequality which says that $\langle x,y \rangle \leq \norm{x}_{1/p} \cdot \norm{y}_{1/q}$ for any $p + q = 1$. In particular, $\langle x,y \rangle \leq \norm{x}_{\infty} \cdot \norm{y}_{1}$. Such inequality can be extended to show the following bound:

\begin{claim}\label{cl::min_holder}
	Fix $x,y \in \R^n$.
	Then, $\langle x,y \rangle \leq \norm{\min\{x,y\}}_{\infty} \cdot \left(\norm{x}_{1} + \norm{y}_{1}\right)$.
\end{claim}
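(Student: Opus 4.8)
The plan is to reduce the statement to a single pointwise inequality between the coordinates of $x$ and $y$, and then close the argument with precisely the $\ell_\infty$--$\ell_1$ instance of Hölder's inequality recalled just above the claim. Writing $\langle x,y\rangle = \sum_{i} x_i y_i$, I would first establish that for every coordinate $i$,
\[
	x_i y_i \;\le\; \min(x_i,y_i)\,(x_i + y_i).
\]
The cleanest way to see this is to compute the difference directly: $\min(x_i,y_i)(x_i+y_i) - x_i y_i = \min(x_i,y_i)^2 \ge 0$. Indeed, if $x_i \le y_i$ the difference is $x_i(x_i+y_i)-x_iy_i = x_i^2$, and symmetrically it equals $y_i^2$ when $y_i \le x_i$. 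Crucially this holds for \emph{all} reals with no sign assumption, which is what lets the claim apply to arbitrary $x,y \in \R^n$.

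Summing the pointwise bound over $i$ gives $\langle x,y\rangle \le \sum_i \min(x_i,y_i)(x_i+y_i) = \langle \min\{x,y\},\, x+y\rangle$. At this point I would pass to absolute values termwise and apply Hölder in the form $\langle u,v\rangle \le \norminf{u}\cdot\normo{v}$ with $u = \min\{x,y\}$ and $v = x+y$, obtaining $\langle \min\{x,y\}, x+y\rangle \le \norminf{\min\{x,y\}}\cdot \normo{x+y}$. Finally the triangle inequality $\normo{x+y} \le \normo{x} + \normo{y}$ yields exactly $\langle x,y\rangle \le \norminf{\min\{x,y\}}\,(\normo{x}+\normo{y})$, as desired.

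The only point that needs care — and the one I expect to be the main obstacle — is the handling of negative entries: both $\min(x_i,y_i)$ and $x_i+y_i$ can individually be negative, so one cannot simply drop signs and must insert the step $\sum_i \min(x_i,y_i)(x_i+y_i) \le \sum_i |\min(x_i,y_i)|\,|x_i+y_i|$ before invoking Hölder. Once that bookkeeping is in place the whole argument is a short chain of inequalities. In the actual use case the relevant vectors are nonnegative count vectors, where the identity $x_iy_i = \min(x_i,y_i)\max(x_i,y_i)$ together with $\max(x_i,y_i)\le x_i+y_i$ gives the bound even more transparently; but the computation above establishes the inequality in full generality.
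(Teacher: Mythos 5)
Your proof is correct and is essentially the paper's argument: both hinge on the $\ell_\infty$--$\ell_1$ case of H\"older's inequality with $\min\{x,y\}$ in the sup-norm slot, followed by bounding the $\ell_1$ norm of the partner vector by $\normo{x}+\normo{y}$. The only difference is the choice of partner: the paper uses the exact identity $\langle x,y\rangle=\langle \min\{x,y\},\max\{x,y\}\rangle$ and then $\normo{\max\{x,y\}}\le\normo{x}+\normo{y}$ --- which, contrary to your closing remark, needs no nonnegativity, since the identity holds for all reals and $|\max(x_i,y_i)|\le \max(|x_i|,|y_i|)$ --- whereas you pair $\min\{x,y\}$ with $x+y$ via the pointwise inequality $x_iy_i\le\min(x_i,y_i)(x_i+y_i)$ and the triangle inequality, a negligible variation.
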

\begin{proof}
	Let $a = \min\{x,y\}$ and $b = \max\{x,y\}$. Then, by Hölder's inequality, 
	$$\langle x,y \rangle = \langle a,b \rangle \leq \norm{a}_\infty \cdot \norm{b}_1 = \norm{\min\{x,y\}}_{\infty} \cdot \norm{\max\{x,y\}}_{1}\leq \norm{\min\{x,y\}}_{\infty} \cdot \left(\norm{x}_{1} + \norm{y}_{1}\right).$$ 
\end{proof}

\paragraph{Simple lower bound for the $\LCS$.}
We next state a useful fact about the minimal \LCS of two strings:

\begin{fact}\label{ft::min_lis}
 $\LCS(x,y) \geq \norm{\min\{\psi(x),\psi(y)\}}_\infty$.	
\end{fact}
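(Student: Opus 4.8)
The plan is to exhibit an explicit common subsequence that realizes the claimed length. First I would unpack the right-hand side: by the definition of $\psi$ and of the $\infty$-norm,
$$\norm{\min\{\psi(x),\psi(y)\}}_\infty = \max_{c \in \Sigma} \min\{\psi(x)_c, \psi(y)_c\},$$
so it suffices to show that for every letter $c \in \Sigma$ we have $\LCS(x,y) \geq \min\{\psi(x)_c, \psi(y)_c\}$; taking the maximum over $c$ then yields the Fact.

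Fix such a letter $c$ and set $m \triangleq \min\{\psi(x)_c, \psi(y)_c\}$. The intuition is simply that the constant string consisting of $c$ repeated $m$ times is a subsequence of both $x$ and $y$, since $c$ occurs at least $m$ times in each. To make this precise in the monotone-set language from the excerpt, I would let $i_1 < i_2 < \cdots$ enumerate the positions of $c$ in $x$ (there are $\psi(x)_c \geq m$ of them) and $j_1 < j_2 < \cdots$ enumerate the positions of $c$ in $y$ (there are $\psi(y)_c \geq m$ of them), and take
$$P \triangleq \{(i_a, j_a) : a \in [m]\}.$$

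Next I would verify that $P$ is monotone and that every pair in it is a match. Monotonicity is immediate because both index sequences are strictly increasing: for $a,b \in [m]$ we have $i_a = i_b \Leftrightarrow a = b \Leftrightarrow j_a = j_b$ and $i_a < i_b \Leftrightarrow a < b \Leftrightarrow j_a < j_b$, which are exactly the two defining conditions of a monotone set. Moreover $x_{i_a} = c = y_{j_a}$ for every $a$, so $\sum_{(i,j) \in P} \1[x_i = y_j] = |P| = m$. Plugging $P$ into the monotone-set formula for $\LCS$ stated earlier then gives $\LCS(x,y) \geq m$, completing the step.

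I do not expect a genuine obstacle here: the statement is a one-line lower bound, and the only thing to be careful about is bookkeeping — unpacking the $\infty$-norm correctly as a maximum over letters, and checking the monotonicity conditions against the precise definition rather than merely asserting that the constant string is a common subsequence. The reason to isolate this Fact is presumably its downstream role (together with \Cref{cl::min_holder}) in bounding the number of matching pairs by $2n \cdot \LCS(x,y)$, so the cleanest argument is the one that produces a certified monotone set directly.
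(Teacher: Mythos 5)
Your proof is correct and is essentially the paper's argument: the paper likewise observes that if both strings contain $k$ copies of a letter $c$, then $c^k$ is a common subsequence, and takes the best letter. You have simply made the same idea explicit by unpacking the $\infty$-norm and certifying the constant subsequence as a monotone set, which is fine but adds nothing beyond bookkeeping.
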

\begin{proof}
	That fact is immediate, since if each of $x$ and $y$ contain $k$ copies of the same letter $c$, then $c^k$ is a common subsequence to both $x$ and $y$.
\end{proof}

We combine the above to show the following lower bound on the $\LCS$:
\begin{lemma}\label{lm::lcs_lower_bound}
 $\LCS(x,y) \geq \tfrac{|z|}{2n}$.	
\end{lemma}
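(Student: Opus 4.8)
The plan is to first rewrite $|z|$ as an inner product of the two letter-count vectors, then bound that inner product via \Cref{cl::min_holder}, and finally invoke the elementary lower bound of \Cref{ft::min_lis} to close the chain. Everything reduces to three short manipulations once $|z|$ is identified as $\langle \psi(x),\psi(y)\rangle$.

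First I would unpack the definition of $z$ in \eqref{eq::z_def}. Since the block $z_i$ consists of all indices $j$ with $y_j = x_i$, its size is exactly $|y^{-1}(x_i)| = \psi(y)_{x_i}$, the number of occurrences of the letter $x_i$ in $y$. Summing over $i \in [n]$ and regrouping the $n$ positions according to their letter $c \in \Sigma$ gives
$$|z| = \sum_{i \in [n]} \psi(y)_{x_i} = \sum_{c \in \Sigma} \psi(x)_c \, \psi(y)_c = \langle \psi(x), \psi(y) \rangle.$$

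Next I would apply \Cref{cl::min_holder} with the two vectors taken to be $\psi(x)$ and $\psi(y)$, which yields
$$\langle \psi(x), \psi(y) \rangle \leq \norm{\min\{\psi(x),\psi(y)\}}_\infty \cdot \left(\norm{\psi(x)}_1 + \norm{\psi(y)}_1\right).$$
The key observation is that each count vector sums to the string length, i.e.\ $\norm{\psi(x)}_1 = \norm{\psi(y)}_1 = n$, since each of the $n$ positions contributes exactly one to the total letter count. Hence the right-hand factor equals $2n$, and combining with the identity above gives $|z| \leq 2n \cdot \norm{\min\{\psi(x),\psi(y)\}}_\infty$.

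Finally, \Cref{ft::min_lis} asserts $\LCS(x,y) \geq \norm{\min\{\psi(x),\psi(y)\}}_\infty$, so chaining this with the previous bound produces $\LCS(x,y) \geq \tfrac{|z|}{2n}$, as claimed. I do not expect any genuine obstacle here: every step is a one-line manipulation, and the only point that warrants a moment's care is the identity $|z| = \langle \psi(x),\psi(y)\rangle$, which is just the double sum over positions reorganized into a sum over alphabet symbols.
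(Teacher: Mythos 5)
Your proposal is correct and follows exactly the paper's own argument: identify $|z| = \langle\psi(x),\psi(y)\rangle$, bound it via \Cref{cl::min_holder}, and finish with \Cref{ft::min_lis}. The only difference is that you spell out the identity $|z| = \langle\psi(x),\psi(y)\rangle$ and the fact that $\norm{\psi(x)}_1 = \norm{\psi(y)}_1 = n$, which the paper leaves implicit.
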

\begin{proof}
	Observe that by construction $|z| = \langle\psi(x),\psi(y)\rangle$. Now, invoking \Cref{cl::min_holder,ft::min_lis}, we obtain:
	$$
	|z| \leq \norm{\min\{\psi(x),\psi(y)\}}_{\infty} \cdot \left(\norm{\psi(x)}_{1} + \norm{\psi(y)}_{1}\right) \leq 2n \cdot \LCS(x,y),
	$$
	which implies the Lemma.
\end{proof}

One direct implication of the above is a simple deterministic and exact algorithm for $\LCS$ in time $O(n \ell \log(\ell))$, where $\ell = \LCS(x,y)$. While we do not derive it in this paper, one can observe this can be done by simply reducing the $\LCS$ problem to $\reslis$, and solving it directly as a standard $\LIS$ problem, which takes $O(|z|\log(\LIS(z)))$ time.
\section{Reduction Algorithm}

In this section we prove \Cref{lm::lcs_lis_reduction}. Namely, we are given an algorithm $\abl$ for solving $\blocklis(z)$ with additive error parameter $\lambda$, and we use it to estimate $\LCS(x,y)$.

The algorithm is the straight-forward one: 

\vspace{2mm}
\begin{figure}[h!]
\fbox{
\begin{minipage}[t]{0.95\textwidth}
\textsc{EstimateLCS}$(x,y)$:
\begin{enumerate}
	\item Store the function $y^{-1}$ over $\Sigma$ (say, in the sparse matrix representation).
	\item 
	Set $d \gets 
	|z|/2n$, where $z$ is defined as in \Cref{eq::z_def}.
	\item Run
	$\abl$ with input $z$ and $\lambda \gets \tfrac{d}{n \log n}$. 
	\item Output $\widehat{\ell} \gets$ the maximum between $d$ and the output of $\abl$. 
\end{enumerate} 
\end{minipage}
}
\caption{Description of the algorithm for estimating the \LCS between $x$ and $y$}
\end{figure}
\subsection{Analysis}

We show the above algorithm provides the guarantees of \Cref{lm::lcs_lis_reduction}.

\begin{proof}[Proof of \Cref{lm::lcs_lis_reduction}]
	
First, let us argue $\widehat{\ell} \in \left[\tfrac{1-o(1)}{f(n \log n)},1\right] \cdot \LCS(x,y)$. 

\paragraph{Upper Bound.}
The upper-bound is immediate since the output of $\abl$ is promised to be at most $\reslis(z) = \LCS(x,y)$, and since $d \leq \LCS(x,y)$ from \Cref{lm::lcs_lower_bound}. 

\paragraph{Lower Bound.}
Let $\widehat{\ell_\A}$ be the output of $\abl$. Then, we are guaranteed 
$$\widehat{\ell_\A} \geq \tfrac{ \blocklis(z)}{f(|z|/\lambda)} - \lambda n = \tfrac{\LCS(x,y)}{f(n \log n)} - \tfrac{2d}{\log n}.$$
Therefore, $\widehat{\ell} = \max\{\widehat{\ell_\A},d\} \geq (1-o(1)) \cdot \tfrac{\LCS(x,y)}{f(n \log n)}$ as needed.

Last, we show the runtime is at most $t\left(O(n \cdot \log(n)),n\right) + O(n)$.

\paragraph{Runtime Analysis.} The runtime for the first 2 steps is clearly $O(n)$. It remains to show the runtime on the $\reslis$ algorithm, which is promised to be at most 
$$t\left(\tfrac{|z|}{\lambda n},n\right) = t\left(\tfrac{2d}{\lambda},n\right) = t\left(O(n \cdot \log(n)),n\right).$$

This concludes the proof.

\end{proof}

\section{Acknowledgements}
The author would like to thank Alexandr Andoni for useful discussion and comments on earlier drafts of this note.

\begin{flushleft}
\bibliography{andoni}{}
\bibliographystyle{alpha}
\end{flushleft}

\end{document}